\newtheorem{Theorem}{Theorem}
\newtheorem{Proposition}{Proposition}
\newtheorem{Lemma}{Lemma}
\newtheorem{Conjecture}{Conjecture}
\title{On the hyperbolicity and causality of the relativistic Euler system under the kinetic equation of state
}
\author{Juan Calvo\footnote{Universitat Pompeu Fabra, Dept. de Tecnologies de la Informaci\'o i les Comunicacions, C/T\`anger 122-140, 08018 Barcelona, Spain. Email adress: juan.calvo@upf.edu.}}
\date{}
\begin{document}
 \maketitle

\begin{abstract}
We show that a pair of conjectures raised in \cite{speck-strain} concerning the construction of normal solutions to the relativistic Boltzmann equation are valid. This ensures that the results in \cite{speck-strain} hold for any range of positive temperatures and that the relativistic Euler system under the kinetic equation of state is hyperbolic and the speed of sound cannot overcome $c/\sqrt{3}$.
\end{abstract}
\section{Introduction}
The purpose of \cite{speck-strain} was to construct solutions to the special relativistic Boltzmann equation as corrections of local relativistic Maxwellians, 
where the thermodynamic fields evolve in time according to the relativistic Euler fluid equations.

To close the relativistic Euler system we add an equation of state. It turns out that in the context of \cite{speck-strain} the right choice is obtained extrapolating the relations that hold between the thermodynamic quantities of the global equilibria of the relativistic Boltzmann equation (known as J\"uttner equilibria). We may call this the kinetic equation of state. 
It is given by
\begin{equation}
\label{state1}
p=k_B n \theta = m_0 c^2 \frac{n}{\beta},
\end{equation}
\begin{equation}
\label{state2}
\rho= m_0 c^2 n \frac{K_1(\beta)}{K_2(\beta)}+3 p,
\end{equation}
\begin{equation}
\label{state3}
n = 4 \pi e^4 m_0^3 c^3 h^{-3}\exp \left( \frac{-\eta}{k_B}\right) \frac{K_2(\beta)}{\beta}\exp \left( \beta \frac{K_1(\beta)}{K_2(\beta)}\right).
\end{equation}
Here $n$ is the proper number density, $\theta$ is the temperature, $\beta= m_0 c^2/(k_B\theta)$ is a dimensionless quantity proportional to the inverse temperature --where $c$ is the speed of light, $m_0$ is the mass of gas particles and $k_B$ is Boltzmann's constant--, $p$ stands for the pressure, $\rho$ stands for proper energy density and $\eta$ stands for the entropy per particle, while $h$ is Planck's constant and $K_1,\ K_2$ are modified Bessel functions (which we define below). For the related computations see \cite{CercignaniKremer,deGroot,Synge} for instance. In fact,  it can be checked that the same equation of state arises in the formal hydrodynamic limit of
 a kinetic model having J\"uttner distributions as equilibria. This is the case for BGK-type models, see \cite{BGK,AW,CercignaniKremer,Majorana, Marle2} for example.

During their analysis, the authors of \cite{speck-strain} ran into some difficulties as several properties of the kinetic equation of state (local solvability for any of the thermodynamic variables in terms of any other two of them, hyperbolicity and causality of the relativistic Euler system under the kinetic equation of state) were required but no rigorous proof for them seemed to be available in the literature. These properties were proven in \cite{speck-strain} to hold true in certain temperature regimes and conjectured to be true in general. Then a pair of  statements were raised in \cite{speck-strain}, which can be quoted as follows:
\begin{Conjecture}
The map $(n,\beta) \leftrightarrow (\eta,\rho)$ given implicitly by the kinetic equation of state (\ref{state1})--(\ref{state3}) is an auto-diffeomorphism of $]0,\infty[\times]0,\infty[$.
\end{Conjecture}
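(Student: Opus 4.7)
The plan is to invert $\Phi:(n,\beta)\mapsto(\eta,\rho)$ by fibering over the $\rho$-coordinate, reducing bijectivity to a one-dimensional monotonicity statement. Smoothness of $\Phi$ on $(0,\infty)^2$ is immediate from the smoothness and strict positivity of $K_1,K_2$ there. From \eqref{state1}--\eqref{state2},
$$\rho=m_0c^2\,n\,G(\beta),\qquad G(\beta):=\frac{K_1(\beta)}{K_2(\beta)}+\frac{3}{\beta}>0,$$
so each level set $\{\rho=\rho_0>0\}$ is the smooth curve $\gamma_{\rho_0}=\{n=\rho_0/(m_0c^2G(\beta))\}$, globally parametrised by $\beta\in(0,\infty)$. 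A direct check from \eqref{state1}--\eqref{state3} shows that the Gibbs identity $\theta\,d\eta=d(\rho/n)+p\,d(1/n)$ holds, and restricting it to $\gamma_{\rho_0}$ yields
$$\frac{d\eta}{d\beta}\bigg|_{\gamma_{\rho_0}}=\frac{\rho+p}{\theta n}\,\frac{G'(\beta)}{G(\beta)}.$$
The conjecture therefore reduces to (a) the strict monotonicity $G'(\beta)<0$ on $(0,\infty)$, and (b) $\eta|_{\gamma_{\rho_0}}$ being unbounded at both ends of $(0,\infty)$.

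Step (a) is the main obstacle, and I would argue it probabilistically rather than by direct Bessel manipulations. Standard integral representations of $K_1,K_2$ together with one integration by parts show that $G(\beta)$ coincides with the mean of $u$ under the (Jüttner) probability density
$$\mu_\beta(u)=\frac{\beta\, u\sqrt{u^2-1}}{K_2(\beta)}\,e^{-\beta u},\qquad u\ge 1,$$
so $G(\beta)=\langle u\rangle_{\mu_\beta}$. Differentiating under the integral sign, and using that the logarithmic $\beta$-derivative of the normaliser $\beta/K_2(\beta)$ equals $G(\beta)$ itself (a consequence of the recursion $K_2'=-K_1-2K_2/\beta$), produces the variance identity
$$G'(\beta)=\langle u\rangle_{\mu_\beta}^2-\langle u^2\rangle_{\mu_\beta}=-\mathrm{Var}_{\mu_\beta}(u),$$
which is strictly negative since $\mu_\beta$ has non-degenerate support. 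This is in essence the positivity of the specific heat at constant density for the Jüttner gas; spotting the probabilistic reformulation is what turns (a) into a one-line derivation.

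Step (b) I would settle with the classical asymptotics $K_n(\beta)\sim\tfrac12\Gamma(n)(2/\beta)^n$ as $\beta\to 0^+$ and $K_n(\beta)\sim\sqrt{\pi/(2\beta)}\,e^{-\beta}$ as $\beta\to\infty$. Plugging them into the explicit formula
$\eta/k_B=\ln\!\bigl(C\,K_2(\beta)G(\beta)/(\rho_0\beta)\bigr)+\beta K_1(\beta)/K_2(\beta)$
along $\gamma_{\rho_0}$ (with $C=4\pi e^4m_0^4c^5/h^3$), the logarithmic contribution drives $\eta\to+\infty$ as $\beta\to 0^+$, while at $\beta\to\infty$ the cancellation $\beta K_1/K_2=\beta-3/2+O(1/\beta)$ against $\ln K_2\sim-\beta$ leaves an unbounded logarithmic residue, forcing $\eta\to-\infty$. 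Combined with (a), $\eta|_{\gamma_{\rho_0}}$ is therefore a smooth strictly decreasing bijection onto $\mathbb{R}$, so every prescribed positive value of $\eta$ has a unique preimage $\beta$ on $\gamma_{\rho_0}$, with the corresponding $n=\rho_0/(m_0c^2G(\beta))$ automatically positive. Smoothness of the inverse follows from the inverse function theorem: the Jacobian factorises via the Gibbs identity as $\det D\Phi=(\partial_n\eta|_\rho)(\partial_\beta\rho|_n)=-(\rho+p)\,m_0c^2\,G'(\beta)/(\theta n)$, which is strictly positive on $(0,\infty)^2$ by (a).
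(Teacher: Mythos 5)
Your argument is essentially correct, and it takes a genuinely different route from the paper. The paper's proof is two lines long given two external facts: it invokes the reduction (carried out in \cite{speck-strain}) of Conjecture 1 to the single Bessel inequality \eqref{conjetura1}, rewrites that inequality via the identity \eqref{repform} as $(K_1/K_2)'<4/\beta^2$, and concludes from the Kunik--Qamar--Warnecke bound \eqref{kunik}, which is cited from \cite{Kunik2004} rather than proved. You bypass both citations. Your step (a) is precisely inequality \eqref{kunik} in disguise --- $G'(\beta)=(K_1/K_2)'-3/\beta^2<0$, i.e.\ the strict monotonicity of the paper's specific energy $\psi$ --- and your variance identity $G'=-\mathrm{Var}_{\mu_\beta}(u)$ (which checks out: $K_2/\beta$ is indeed the partition function of the J\"uttner density, and $K_2'=-K_1-2K_2/\beta$ gives $\langle u\rangle_{\mu_\beta}=G$) supplies a self-contained one-line proof of it, together with its physical meaning (positive specific heat), which the paper only remarks on after the fact. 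Your fibering-over-$\rho$ argument then establishes global bijectivity and smooth invertibility directly, rather than through the Jacobian condition plus the unproved-here equivalence from \cite{speck-strain}; the Gibbs-identity computation of $d\eta/d\beta$ along $\gamma_{\rho_0}$ and the Jacobian factorization are both correct. One caveat deserves a sentence in any write-up: your own asymptotics show $\eta\to-\infty$ as $\beta\to\infty$ on each fiber (the usual low-temperature divergence of the classical ideal-gas entropy), so the image of $(0,\infty)^2$ is $\mathbb{R}\times(0,\infty)$ and the map is a diffeomorphism onto that set, not literally an auto-diffeomorphism of $]0,\infty[\times]0,\infty[$ as the quoted statement reads; this is an imprecision in the statement's phrasing rather than a gap in your proof, but you should say explicitly that you are proving the map is a diffeomorphism onto its image and identify that image.
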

\begin{Conjecture}
Under the kinetic equation of state (\ref{state1})--(\ref{state3}), there holds that:
\begin{itemize}

\item There exists a smooth function $f_{\rm kinetic}$ such that $p$ can be expressed in terms of $\eta$ and $\rho$ as\footnote{Let us mention that we use the term ``kinetic equation of state'' in a broader sense than what is done in \cite{speck-strain}, where this term is used to refer to $f_{\rm kinetic}$, not to \eqref{state1}--\eqref{state3}.} $p=f_{\rm kinetic}(\eta,\rho)$.  

\item The relativistic Euler system is hyperbolic.

\item The relativistic Euler system is causal (the speed of sound $c_S:=c\sqrt{\frac{\partial p}{\partial \rho}|_{\eta}}$ is  real and less than the speed of light). Furthermore, $ 0<c_S<c/\sqrt{3}$.
\end{itemize} 
\end{Conjecture}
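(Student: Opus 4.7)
The first bullet is immediate from Conjecture~1: since $(n,\beta)\leftrightarrow(\eta,\rho)$ is a smooth auto-diffeomorphism and $p=m_0 c^2 n/\beta$ is manifestly smooth in $(n,\beta)$, composition produces a smooth $p=f_{\rm kinetic}(\eta,\rho)$. For the other two bullets, the plan is (i)~to obtain a closed-form expression for $c_S^2$, (ii)~to reduce $0<c_S^2<c^2/3$ to a single inequality involving $\beta$ and a Bessel ratio, (iii)~to establish that inequality --- the main obstacle --- and (iv)~to invoke the framework of \cite{speck-strain} to promote the control of $c_S$ to hyperbolicity and causality of the full system.

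For (i), the first-law identity $\theta\,d\eta = d(\rho/n) + p\,d(1/n)$ gives, along an isentrope, $d\rho|_\eta = \frac{\rho+p}{n}\,dn$. Computing $(\partial p/\partial n)|_\eta$ by the chain rule between $(n,\beta)$ and $(n,\eta)$ coordinates, and simplifying via the Bessel identity $K_3=K_1+4K_2/\beta$ together with the derivative formula $(K_1/K_2)'=(K_1/K_2)^2+3(K_1/K_2)/\beta-1$, I would arrive at the closed form
\[
\frac{c_S^2}{c^2} \;=\; \frac{\beta^2+4-(u^2+3u)}{(u+4)\,[\beta^2+3-(u^2+3u)]}, \qquad u(\beta):=\beta\,\frac{K_1(\beta)}{K_2(\beta)}.
\]
The factor $\beta^2+3-u^2-3u$ in the denominator equals $c_V/k_B$, the dimensionless specific heat at constant volume of the J\"uttner gas.

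For (ii), the elementary identity $u^2(u+4)/(u+1)=(u^2+3u-3)+3/(u+1)$ shows that the single inequality $\beta^2(u+1)>u^2(u+4)$ (for all $\beta>0$) implies both $c_V>0$ --- and hence positivity of numerator and denominator above, yielding $c_S^2>0$ --- and, by direct rearrangement, the bound $c_S^2<c^2/3$. Thus Conjecture~2, modulo step (iv), reduces entirely to proving that single inequality.

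Step (iii) is the heart of the argument. A first strategy uses the Poisson-type representation $K_\nu(\beta)=\sqrt{\pi}\,(\beta/2)^\nu/\Gamma(\nu+\tfrac12)\cdot\int_1^\infty e^{-\beta y}(y^2-1)^{\nu-1/2}\,dy$; setting $I_k:=\int_1^\infty e^{-\beta y}(y^2-1)^{k/2}\,dy$, so that $K_1=\beta I_1$, $K_2=(\beta^2/3)I_3$ and $K_3=(\beta^3/15)I_5$, the target inequality becomes $5I_3^3+15I_1I_3^2>9I_1^2I_5$. A naive Cauchy--Schwarz gives only $I_3^2\le I_1I_5$, pointing the wrong way, so one must exploit more subtle features of the weight $e^{-\beta y}\,dy$ on $[1,\infty)$, presumably through the integration-by-parts relation $dI_k/d\beta=-\beta I_{k+2}/(k+2)$. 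A complementary strategy is based on the Riccati-type ODE $\beta u'(\beta)=u(u+4)-\beta^2$ (which follows at once from the derivative formula for $K_1/K_2$): in these terms the desired inequality is equivalent to the monotonicity statement $\frac{d}{d\beta}[u(\beta)^2-\beta^2]<0$, and combined with the boundary value $\lim_{\beta\to 0^+}(u^2-\beta^2)=0$ and the asymptotic $u=\beta-3/2+O(1/\beta)$ at infinity, this may be closed by an ODE comparison or supersolution argument. Once the inequality is established, step (iv) is essentially an appeal to \cite{speck-strain}, where hyperbolicity and causality of the relativistic Euler system are shown to follow from smoothness of $f_{\rm kinetic}$ together with the sharp bound $0<c_S<c/\sqrt{3}$.
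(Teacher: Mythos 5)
Your reductions are all correct and in fact land exactly on the paper's own reformulation: with $u=\beta K_1/K_2$, your single target inequality $\beta^2(u+1)>u^2(u+4)$ is precisely inequality \eqref{reformulation} after clearing denominators, your closed form for $c_S^2/c^2$ agrees with the reciprocal of the middle term of \eqref{conjetura2}, and the identity $u^2(u+4)/(u+1)=(u^2+3u-3)+3/(u+1)$ does show that this one inequality yields positivity of $c_S^2$ and the bound $c_S^2<c^2/3$ simultaneously. The Riccati equation $\beta u'=u(u+4)-\beta^2$ and the equivalence of the target with $\frac{d}{d\beta}(u^2-\beta^2)<0$ also check out, as does the translation into $5I_3^3+15I_1I_3^2>9I_1^2I_5$ via $\beta^2 I_5=15I_1+20I_3$. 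The genuine gap is that the heart of the matter --- actually proving $\beta^2(u+1)>u^2(u+4)$ for all $\beta>0$ --- is never carried out. Strategy (A) stops at the observation that Cauchy--Schwarz points the wrong way (you would need an \emph{upper} bound on $I_5$ in terms of $I_1,I_3$, i.e.\ the reverse of the Tur\'an-type direction, and none is supplied). Strategy (B) is circular as stated: the monotonicity of $u^2-\beta^2$ is \emph{equivalent} to the inequality you want at each $\beta$, and the boundary behaviour at $0$ and $\infty$ does not force monotonicity, so ``may be closed by an ODE comparison or supersolution argument'' is a hope rather than a proof until an explicit supersolution is produced. As written, the proposal establishes the first bullet and reduces the other two to an unproved inequality.

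For comparison, the paper closes exactly this gap by quantitative bounds on the Bessel ratio: it proves \eqref{reformulation} for $0<\beta<1$ from the elementary estimate $K_1/K_2\le\beta/2$ (a consequence of the recurrence \eqref{recurrence} and $K_0\ge0$), and for $\beta\ge 1/2$ from the two-sided rational bounds of Lemma \ref{cuatro}, obtained by truncating the binomial expansions of the integrands in \eqref{cacero}--\eqref{casuma} at terms of known sign so that the truncation error has a definite sign. Substituting those bounds into \eqref{reformulation} reduces the whole problem to the explicit polynomial inequality \eqref{novena}, which is checked by inspection. Any rescue of your approach would need comparable quantitative input --- for instance, sign-controlled asymptotic bounds of the type in Lemma \ref{cuatro} to build the supersolution for your Riccati equation, or a sharpened integral inequality relating $I_1$, $I_3$, $I_5$.
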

It was also shown in \cite{speck-strain} that these conjectures can be phrased in terms of relations between modified Bessel functions as 
\begin{equation}
\label{conjetura1}
3\frac{K_1(\beta)}{K_2(\beta)}+\beta \left(\frac{K_1(\beta)}{K_2(\beta)} \right)^2-\beta-\frac{4}{\beta}<0,\quad \forall \beta>0
\end{equation}
for the first conjecture and
\begin{equation}
\label{conjetura2}
3<3+\beta\frac{K_1(\beta)}{K_2(\beta)}+\frac{4\frac{K_1(\beta)}{K_2(\beta)} + \beta\left(\frac{K_1(\beta)}{K_2(\beta)} \right)^2- \beta}{3\frac{K_1(\beta)}{K_2(\beta)} + \beta\left(\frac{K_1(\beta)}{K_2(\beta)} \right)^2- \beta-\frac{4}{\beta}}<+\infty ,\quad \forall \beta>0
\end{equation}
for the second conjecture. The purpose of this paper is to show that the previous inequalities \eqref{conjetura1}--\eqref{conjetura2} indeed hold true, thus Conjecture 1 and Conjecture 2 are then shown to be valid. This implies in particular that the results in \cite{speck-strain} concerning the hydrodynamic limit of the relativistic Boltzmann equation hold true for any range of positive temperatures. The bound on the sound speed may also be important in cosmology: some results suggest that $c/\sqrt{3}$ may be a boundary value separating unstable fluid regimes from stable fluid regimes for the case of nearly-uniform solutions to the fluid equations in rapidly expanding spacetimes \cite{Rendall04, Speck2012}. Specifically, an equation of state of the form $p=c_S^2\rho$ was addressed in these papers. On one hand, evidence was found pointing that solutions may be unstable in the regime given by $c_S>c/\sqrt{3}$. On the other hand, global future stability for small data solutions when $0\le c_S \le c/\sqrt{3}$ and the spacetime is expanding at a sufficiently rapid rate has been shown in \cite{Speck2012}.


\section{Proofs of the conjectures}
Modified Bessel functions can be defined as \cite{librotablas,Synge} 
\begin{equation}
\label{ch-representation}
K_j(\beta) = \int_0^\infty \cosh (jr) e^{-\beta \cosh(r)}\ dr \ge 0, \quad j \in \mathbb{N}, \ \beta>0.
\end{equation}
We recall that these functions obey the following recurrence relation:
\begin{equation}
\label{recurrence}
K_{j+1}(\beta)= \frac{2j}{\beta}K_j(\beta) + K_{j-1}(\beta), \quad j \in \mathbb{N}, \ \beta>0.
\end{equation}
We are now ready to display our first result.
\begin{Theorem}
\label{uno}
Inequality \eqref{conjetura1} holds true for any $\beta>0$. As a consequence, Conjecture 1 holds true.
\end{Theorem}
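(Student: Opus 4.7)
Multiplying \eqref{conjetura1} by the positive quantity $\beta K_2(\beta)^2$, the statement reduces to showing that
\[
\phi(\beta) := 3\beta\, K_1(\beta) K_2(\beta) + \beta^2 K_1(\beta)^2 - (\beta^2+4) K_2(\beta)^2 < 0 \quad \text{for all } \beta>0.
\]
The strategy is to establish that $\phi$ is strictly increasing on $(0,\infty)$ and that $\lim_{\beta\to\infty}\phi(\beta) = 0$; these two properties together force $\phi < 0$ everywhere. The limit at infinity follows at once from the classical asymptotic $K_j(\beta) \sim \sqrt{\pi/(2\beta)}\, e^{-\beta}$, which makes every summand of $\phi$ decay as $O(\beta e^{-2\beta})$.

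The monotonicity is the substantive content. I will differentiate $\phi$ using the standard identity $K_\nu'(\beta) = -\tfrac{1}{2}(K_{\nu-1}+K_{\nu+1})$ (which follows from \eqref{ch-representation} and the product-to-sum formula for hyperbolic cosines) together with \eqref{recurrence}; these combine into the convenient expressions $K_1' = K_1/\beta - K_2$ and $K_2' = -K_1 - 2K_2/\beta$. After routine algebra,
\[
\phi'(\beta) = \beta K_1^2 + 8 K_1 K_2 + \left(\frac{16}{\beta}-\beta\right) K_2^2.
\]
The decisive step is the difference-of-squares factorization
\[
\beta\phi'(\beta) = (\beta K_1 + 4 K_2)^2 - (\beta K_2)^2 = \bigl(\beta K_1 + (4-\beta) K_2\bigr)\bigl(\beta K_1 + (4+\beta) K_2\bigr).
\]
The second factor is manifestly positive, and the first is obviously positive when $0 < \beta \le 4$.

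The one genuinely nontrivial case is $\beta > 4$, where we must show $\beta K_1 > (\beta-4) K_2$. Using \eqref{recurrence} to substitute $K_2 = K_0 + 2K_1/\beta$, this is equivalent to $(\beta^2 - 2\beta + 8) K_1 > \beta(\beta-4) K_0$. Two elementary ingredients then close the argument. Algebraically, $(\beta^2 - 2\beta + 8) - \beta(\beta-4) = 2\beta + 8 > 0$, so $\beta^2 - 2\beta + 8 > \beta(\beta-4) > 0$ for $\beta > 4$. Analytically, the integral representation \eqref{ch-representation} yields $K_1 - K_0 = \int_0^\infty(\cosh r - 1) e^{-\beta\cosh r}\, dr > 0$. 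Chaining these, $(\beta^2 - 2\beta + 8) K_1 > \beta(\beta-4) K_1 > \beta(\beta-4) K_0$, as desired. The main obstacle is locating the factorization in the second display; once that identity is in hand, everything slots together from the recurrence \eqref{recurrence}, the integral representation \eqref{ch-representation}, and elementary algebra.
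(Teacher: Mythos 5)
Your proof is correct, but it takes a genuinely different route from the paper's. The paper obtains \eqref{conjetura1} as an immediate corollary of Proposition \ref{parte1}, quoted from \cite{Kunik2004}: dividing \eqref{conjetura1} by $\beta$ and using the Riccati identity \eqref{repform}, the inequality becomes exactly $\left(K_1/K_2\right)'<4/\beta^2$, which is weaker than the cited bound $\left(K_1/K_2\right)'<3/\beta^2$ of \eqref{kunik} --- so the paper's proof is essentially a citation. You instead clear denominators and show directly that $\phi(\beta)=3\beta K_1K_2+\beta^2K_1^2-(\beta^2+4)K_2^2$ increases to the limit $0$, with monotonicity secured by the factorization $\beta\phi'=\bigl(\beta K_1+(4-\beta)K_2\bigr)\bigl(\beta K_1+(4+\beta)K_2\bigr)$ together with the elementary facts $K_1>K_0>0$. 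I have checked the derivative computation (via $K_1'=K_1/\beta-K_2$ and $K_2'=-K_1-2K_2/\beta$, both consequences of \eqref{ch-representation} and \eqref{recurrence}), the factorization, and the treatment of the case $\beta>4$; all are correct. What your argument buys is self-containedness: it needs only the integral representation, the recurrence, and a crude decay bound, whereas the paper leans on an external lemma whose proof is not reproduced. What it gives up is the sharper constant: Kunik's $3/\beta^2$ (which carries the physical meaning that the specific energy $\rho/n$ increases with temperature) is strictly stronger than the $4/\beta^2$ your argument yields, though the latter is all that Theorem \ref{uno} requires. One small point worth making explicit in a final write-up: the limit $\phi\to0$ needs only an upper bound on $K_1,K_2$, and the elementary estimate $K_j(\beta)\le e\,K_j(1)\,e^{-\beta}$ for $\beta\ge1$ (from $\cosh r\ge1$ in \eqref{ch-representation}) already suffices, so no appeal to the full asymptotic expansion with controlled error terms is actually necessary.
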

This result follows at once from the following non-trivial fact (see \cite{Kunik2004}, Lemma 2.2 for instance):
\begin{Proposition}
\label{parte1}
Let $K_1$ and $K_2$ be the modified Bessel functions defined by \eqref{ch-representation}. Then 
\begin{equation}
\label{repform}
\left(\frac{K_1(\beta)}{K_2(\beta)} \right)'=\left(\frac{K_1(\beta)}{K_2(\beta)} \right)^2 +\frac{3}{\beta}\frac{K_1(\beta)}{K_2(\beta)} -1
\end{equation}
and the following inequality holds true:
\begin{equation}
\label{kunik}
 \left(\frac{K_1(\beta)}{K_2(\beta)} \right)'<\frac{3}{\beta^2}, \quad \forall \beta>0.
\end{equation}
\end{Proposition}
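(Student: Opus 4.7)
The proposition has two parts: the identity (\ref{repform}) and the inequality (\ref{kunik}); I would handle them in that order.

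For (\ref{repform}), the plan is to differentiate (\ref{ch-representation}) under the integral sign and use the product-to-sum identity $\cosh(jr)\cosh(r) = \tfrac12[\cosh((j+1)r)+\cosh((j-1)r)]$ to obtain the standard rule $K_j'(\beta) = -\tfrac12(K_{j-1}(\beta)+K_{j+1}(\beta))$. Combining this with the recurrence (\ref{recurrence}) to eliminate $K_0$ and $K_3$ yields the compact formulas $K_1'(\beta) = K_1(\beta)/\beta - K_2(\beta)$ and $K_2'(\beta) = -2K_2(\beta)/\beta - K_1(\beta)$; plugging these into $(K_1/K_2)' = (K_1'K_2 - K_1K_2')/K_2^2$ gives (\ref{repform}) after routine simplification.

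For (\ref{kunik}), I would first use (\ref{repform}) to reduce the claim to the algebraic positivity $h(\beta)>0$ for all $\beta>0$, where
\begin{equation*}
h(\beta) \;:=\; (\beta^2+3)K_2(\beta)^2 - 3\beta\,K_1(\beta) K_2(\beta) - \beta^2 K_1(\beta)^2.
\end{equation*}
The idea is to show that $h(\beta)/\beta$ is strictly decreasing on $(0,\infty)$. Using the derivative formulas from the previous paragraph, a direct computation of $h'$ produces
\begin{equation*}
h(\beta) - \beta\,h'(\beta) \;=\; 15\,K_2(\beta)^2 + 3\beta\,K_1(\beta) K_2(\beta),
\end{equation*}
which is manifestly positive; hence $(h/\beta)' = -(h-\beta h')/\beta^2 < 0$ throughout $(0,\infty)$.

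It then remains to check the two boundary behaviours. As $\beta\to 0^+$ one has $K_1(\beta)\sim 1/\beta$ and $K_2(\beta)\sim 2/\beta^2$, giving $h(\beta)\sim 12/\beta^4$ and $h(\beta)/\beta\to+\infty$; as $\beta\to\infty$ the large-argument asymptotics $K_\nu(\beta)\sim\sqrt{\pi/(2\beta)}\,e^{-\beta}$ together with the next-order correction yield $h(\beta)\sim\tfrac{3\pi}{4\beta}e^{-2\beta}$ and hence $h(\beta)/\beta\to 0^+$. A continuous strictly decreasing function that tends to $+\infty$ at $0^+$ and to $0^+$ at $+\infty$ is strictly positive on $(0,\infty)$, giving $h>0$ and therefore (\ref{kunik}). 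The main obstacle is guessing the right auxiliary function: once one notices that in $h-\beta h'$ the $K_1^2$ contributions cancel exactly while the remaining $K_2^2$ and $K_1K_2$ coefficients are both positive, the rest reduces to the two limit computations above.
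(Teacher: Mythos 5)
Your argument is correct, but it is worth noting that the paper does not actually prove Proposition~\ref{parte1}: it simply cites Lemma~2.2 of \cite{Kunik2004} (and Synge, p.~89). So your proposal supplies a self-contained proof where the paper outsources one. Your derivation of \eqref{repform} is the standard one (differentiation under the integral in \eqref{ch-representation}, the product-to-sum identity giving $K_j'=-\tfrac12(K_{j-1}+K_{j+1})$, then \eqref{recurrence} to get $K_1'=K_1/\beta-K_2$ and $K_2'=-2K_2/\beta-K_1$), and the reduction of \eqref{kunik} to $h(\beta)>0$ with $h=(\beta^2+3)K_2^2-3\beta K_1K_2-\beta^2K_1^2$ is exactly right. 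I checked the key identity: indeed $h'=(\beta-12/\beta)K_2^2-6K_1K_2-\beta K_1^2$, so $h-\beta h'=15K_2^2+3\beta K_1K_2>0$ and $h/\beta$ is strictly decreasing; combined with $\lim_{\beta\to\infty}h(\beta)/\beta=0$ this forces $h>0$. The one place you should be more careful is the large-$\beta$ limit: since both the $O(\beta)$ and $O(1)$ terms in $h/C^2$ (with $C^2=\tfrac{\pi}{2\beta}e^{-2\beta}$) cancel, obtaining $h\sim\tfrac{3\pi}{4\beta}e^{-2\beta}$ requires the asymptotic series of $K_1,K_2$ through the $\beta^{-2}$ term, not just ``the next-order correction''; your stated asymptotic is nonetheless correct, and in fact only $\lim h/\beta\ge 0$ is needed, so a cruder two-sided bound (e.g.\ the one the paper later derives in Lemma~\ref{cuatro}) would also close this step. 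What your route buys is independence from the external reference; what the paper's citation buys is brevity, since \eqref{kunik} is genuinely nontrivial and the paper treats it as known input.
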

A proof for Proposition \ref{parte1} can be found in \cite{Kunik2004}. See also \cite{Synge}, p. 89. Inequality \eqref{kunik} shows that the specific energy of a gas in equilibrium (defined as $\psi(\beta)=\frac{3}{\beta}+\frac{K_1(\beta)}{K_2(\beta)}= \frac{\rho}{n}$) is an increasing function of the temperature. 

The rest of this section is devoted to prove the following statement:
\begin{Theorem}
\label{dos}
Inequality \eqref{conjetura2} holds true for any $\beta>0$. As a consequence, Conjecture 2 holds true.
\end{Theorem}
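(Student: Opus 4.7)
The plan is to reduce \eqref{conjetura2} to a single algebraic inequality for $R := K_1(\beta)/K_2(\beta)$ and to prove it by a zero-exclusion argument built on a remarkably clean formula for $f'$ at zeros of $f$. First I would observe that the upper bound ``$<+\infty$'' in \eqref{conjetura2} is immediate from Theorem~\ref{uno}: inequality \eqref{conjetura1} makes the denominator of the fraction in \eqref{conjetura2} strictly negative, hence nonzero. For the lower bound ``$>3$'', subtracting $3$ and clearing the (negative) denominator, and using the identities $\beta R^2+3R-\beta=\beta R'$ (from \eqref{repform}) and $\beta R^2+4R-\beta=\beta R'+R$ to tidy the algebra, produces the equivalent inequality
\[
f(\beta) \;:=\; \beta R^3+4R^2-\beta R-1 \;<\; 0, \qquad \forall\,\beta>0.
\]

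Next I would verify the boundary behavior of $f$. As $\beta\to 0^+$ one has $R\to 0$ and $f(\beta)\to -1$. As $\beta\to\infty$, the classical asymptotic expansion of the ratio of modified Bessel functions gives $R(\beta)=1-\tfrac{3}{2\beta}+\tfrac{15}{8\beta^2}+O(\beta^{-3})$, and a direct substitution (keeping the second-order term is essential, since the cancellation $\beta R^3-\beta R\to -3$ precisely offsets $4R^2-1\to 3$) yields $f(\beta)=-\tfrac{3}{2\beta}+O(\beta^{-2})$, so in particular $f<0$ for all $\beta$ sufficiently large.

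The heart of the argument is an identity for $f'$. Differentiating $f$ and eliminating the $R^2$ terms using \eqref{repform} one obtains, after simplification, the pointwise identity
\[
\beta\,f'(\beta) \;=\; 3(\beta R')^2+2\beta^2 R'-3R^2, \qquad \beta>0.
\]
If $f(\beta_0)=0$ at some $\beta_0>0$, then $f(\beta_0)=0$ combined with \eqref{repform} shows that $\beta_0 R R'=1-R^2$ at $\beta_0$, so that $\beta_0 R'=(1-R^2)/R$ and $\beta_0^2 R'=(4R^2-1)/R^2$. Feeding these into the identity above gives the beautiful simplification
\[
R^2\beta_0\,f'(\beta_0) \;=\; 3(1-R^2)^2+2(4R^2-1)-3R^4 \;=\; 1+2R^2 \;>\; 0,
\]
so $f'(\beta_0)>0$ at every zero of $f$.

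The proof is then concluded by a continuity argument. Suppose, toward a contradiction, that $f(\beta_*)\ge 0$ for some $\beta_*>0$; since $f(0^+)=-1<0$, there is a smallest $\beta_0>0$ with $f(\beta_0)=0$, and by the identity just derived $f$ is strictly increasing through $\beta_0$, so $f>0$ on an interval just after $\beta_0$. Since $f(\beta)<0$ for all $\beta$ large, by continuity $f$ must return to zero at some least $\beta_1>\beta_0$, at which necessarily $f'(\beta_1)\le 0$; this contradicts the identity applied at $\beta_1$. Hence $f<0$ on $(0,\infty)$, which proves \eqref{conjetura2}. The main obstacle I anticipate is discovering and verifying the clean simplification $R^2\beta_0 f'(\beta_0)=1+2R^2$ at zeros of $f$: the general expression for $\beta f'$ is a priori a messy quadratic in $\beta R'$, and only the use of $f(\beta_0)=0$ (through $\beta_0 R R'=1-R^2$) makes it collapse to the manifestly positive quantity $1+2R^2$; without this collapse the topological argument has nothing to push against.
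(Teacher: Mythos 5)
Your proof is correct, and the key algebra checks out: the reduction of the first inequality in \eqref{conjetura2} to $f(\beta):=\beta R^3+4R^2-\beta R-1<0$ (with $R=K_1/K_2$) is exactly the paper's reformulation \eqref{reformulation} with denominators cleared; the identity $\beta f'=3(\beta R')^2+2\beta^2R'-3R^2$ does follow from \eqref{repform}; and at a zero of $f$ one indeed has $\beta_0RR'=1-R^2$ and $\beta_0^2R'=(4R^2-1)/R^2$, whence $R^2\beta_0f'(\beta_0)=3(1-R^2)^2+2(4R^2-1)-3R^4=1+2R^2>0$. However, your route is genuinely different from the paper's. The paper proves \eqref{reformulation} by quantitative sandwiching: the elementary bound $0\le R\le\beta/2$ handles small $\beta$, while for $\beta\ge 1/2$ Lemma \ref{cuatro} traps $R$ between explicit rational functions obtained by truncating binomial expansions inside the integral representations of $K_0$ and $K_0+K_1$, after which everything reduces to checking a single polynomial inequality. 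You instead exploit the Riccati structure of \eqref{repform} to show that $f$ can only cross zero upward, and then exclude zeros topologically using the endpoint behaviour $f(0^+)=-1$ and $f(\beta)=-\tfrac{3}{2\beta}+O(\beta^{-2})$. Your argument is shorter, conceptually cleaner, and pinpoints where the inequality is tight (at $\beta=\infty$, where $f\to 0^-$). What it costs is the large-$\beta$ input: you genuinely need the second-order coefficient $15/8$ in the expansion of $R$, since writing $R=1-\tfrac{3}{2\beta}+\tfrac{c}{\beta^2}+O(\beta^{-3})$ gives $f=(2c-\tfrac{21}{4})\beta^{-1}+O(\beta^{-2})$, and negativity would fail if $c\ge 21/8$. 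That expansion is classical (and could be made self-contained by keeping one more term in the binomial-truncation bounds of Lemma \ref{cuatro}), so this is a legitimate citation rather than a gap; the paper's approach, by contrast, is entirely elementary and self-contained at the price of heavier explicit computation.
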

More precisely, the middle term in \eqref{conjetura2} equals $(\frac{\partial p}{\partial \rho}|_{\eta})^{-1}$. Then when the second inequality in \eqref{conjetura2} is valid we get the existence of $f_{\rm kinetic}$, the hyperbolicity of the relativistic Euler system and the fact that $c_S>0$ all at once. When the first inequality in \eqref{conjetura2} holds we get the bound $c_S<c/\sqrt{3}$. See \cite{speck-strain} for details.

Prior to the proof of Theorem \ref{dos}, let us stress that the second inequality in (\ref{conjetura2}) is trivially satisfied once we have proved Theorem \ref{uno}, as the denominator in \eqref{conjetura2} never vanishes. Thus, it suffices to focus on the first inequality. Then, the first thing we do to prove the second conjecture is to rephrase inequality \eqref{conjetura2} as an inequivalent inequality which is more suitable to use certain bounds on the ratio $K_1/K_2$. We now state and prove this reformulation in the following result.

\begin{Lemma}
To show that the first inequality in \eqref{conjetura2} is satisfied for every $\beta>0$ is equivalent to show that
\begin{equation}
\label{reformulation}
\left( \frac{K_1(\beta)}{K_2(\beta)}\right)^2 < 1- \frac{3}{4 + \beta \frac{K_1(\beta)}{K_2(\beta)}}
\end{equation}
holds for every $\beta>0$.
\end{Lemma}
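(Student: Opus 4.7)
The plan is to treat the statement as a purely algebraic equivalence, once the sign of the denominator in the middle term of (\ref{conjetura2}) is pinned down by the already-established Theorem~\ref{uno}. To keep the manipulations readable, I would abbreviate $R := K_1(\beta)/K_2(\beta)$ throughout, so that the first inequality in (\ref{conjetura2}) reads
\[
0 < \beta R + \frac{4R+\beta R^2 - \beta}{3R + \beta R^2 - \beta - 4/\beta}.
\]
The key observation is that the denominator here is exactly the left-hand side of (\ref{conjetura1}); Theorem~\ref{uno} tells us it is strictly negative for every $\beta>0$. That fact drives everything.

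From there the proof is a direct chain of reversible manipulations. First I would subtract $3$ from both sides and bring the two terms to a common fraction. Next, I would multiply through by the strictly negative denominator, which flips the inequality, producing
\[
0 > \beta R\bigl(3R+\beta R^2 - \beta - 4/\beta\bigr) + \bigl(4R + \beta R^2 - \beta\bigr).
\]
Expanding the product, the term $\beta R\cdot(-4/\beta)=-4R$ cancels the free $+4R$, and after collecting like terms one is left with $\beta^2 R^3 + 4\beta R^2 - \beta^2 R - \beta < 0$. Dividing by $\beta>0$ and grouping gives $R^2(4+\beta R) < 1+\beta R$, and dividing once more by $4+\beta R>0$ yields $R^2 < (1+\beta R)/(4+\beta R) = 1 - 3/(4+\beta R)$, which is precisely (\ref{reformulation}). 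Since each step is reversible (the positivity of $\beta$ and of $4+\beta R$ is automatic, and the sign of the denominator is guaranteed by Theorem~\ref{uno}), the two inequalities are indeed equivalent.

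There is no genuine obstacle here: the lemma is essentially bookkeeping. The only point one must not skip over is the explicit appeal to Theorem~\ref{uno} to justify the sign flip when clearing the denominator; without it the chain of equivalences would break. I would therefore state that dependence explicitly at the very start of the argument so the reader sees that the reformulation is \emph{only} legitimate after the first conjecture has been secured.
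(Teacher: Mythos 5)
Your proof is correct and follows essentially the same route as the paper: both arguments hinge on the denominator $3\frac{K_1}{K_2}+\beta(\frac{K_1}{K_2})^2-\beta-\frac{4}{\beta}$ being strictly negative (which you get from Theorem~\ref{uno} and the paper gets from \eqref{kunik}, an equivalent fact), then clear it with a sign flip and divide by $\beta$ and by $4+\beta\frac{K_1}{K_2}>0$ to reach \eqref{reformulation}. The only difference is cosmetic: the paper detours through the derivative identity \eqref{repform} before substituting back, whereas you manipulate the polynomial in $K_1/K_2$ directly, arriving at the identical intermediate inequality \eqref{cuarta}.
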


\begin{proof}
We start using \eqref{repform} to write \eqref{conjetura2} as
\begin{equation}
\label{primera}
 0< \beta\frac{K_1(\beta)}{K_2(\beta)} + \frac{\frac{K_1(\beta)}{K_2(\beta)}+\beta\left( \frac{K_1(\beta)}{K_2(\beta)}\right)'}{\beta\left( \frac{K_1(\beta)}{K_2(\beta)}\right)'-\frac{4}{\beta}}\quad \forall \beta>0.
\end{equation}
Note that thanks to \eqref{kunik} we can ensure that the denominator of the second term on the rhs is strictly negative. Due to this fact, \eqref{primera} is equivalent to
\begin{equation}
\label{segunda}
\frac{K_1(\beta)}{K_2(\beta)} + \beta\left( \frac{K_1(\beta)}{K_2(\beta)}\right)'<-\beta^2 \frac{K_1(\beta)}{K_2(\beta)} \left( \frac{K_1(\beta)}{K_2(\beta)}\right)' + 4 \frac{K_1(\beta)}{K_2(\beta)}
\end{equation}
We substitute now the value of the derivative of the ratio $K_1/K_2$ given by \eqref{repform} in \eqref{segunda}, obtaining
\begin{equation}
\label{tercera}
\left(  \beta + \beta^2 \frac{K_1(\beta)}{K_2(\beta)} \right) \left\{ \left( \frac{K_1(\beta)}{K_2(\beta)}\right)^2 +\frac{3}{\beta}\frac{K_1(\beta)}{K_2(\beta)}-1\right\} < 3 \frac{K_1(\beta)}{K_2(\beta)}.
\end{equation}
Expanding the product and rearranging a bit, inequality \eqref{tercera} can be recast as
\begin{equation}
\label{cuarta}
\beta \left( \frac{K_1(\beta)}{K_2(\beta)}\right)^2 \left(4 + \beta \frac{K_1(\beta)}{K_2(\beta)} \right)<\beta+\beta^2\frac{K_1(\beta)}{K_2(\beta)}.
\end{equation}
We divide by $\beta$ and the third term on the lhs of \eqref{cuarta} to arrive to \eqref{reformulation}.
\end{proof}
To begin with, we can show that \eqref{reformulation} holds for $0<\beta<1$. This is a consequence of the following estimate.
\begin{Lemma}
\label{ldos}
The inequality
$
0\le K_1/K_2\le \beta/2
$
holds for all $\beta>0$.
\end{Lemma}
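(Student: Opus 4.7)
The lower bound is essentially free. Since the integrand $\cosh(jr)\,e^{-\beta \cosh(r)}$ in \eqref{ch-representation} is strictly positive for every $r > 0$, both $K_1(\beta)$ and $K_2(\beta)$ are strictly positive, so $K_1(\beta)/K_2(\beta) > 0$ for all $\beta > 0$.

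For the upper bound my plan is to exploit the recurrence \eqref{recurrence}. Specifying $j=1$ in \eqref{recurrence} gives
\begin{equation*}
K_2(\beta) \;=\; \frac{2}{\beta} K_1(\beta) \,+\, K_0(\beta).
\end{equation*}
By the integral representation \eqref{ch-representation} we also have $K_0(\beta) \ge 0$ (in fact strictly positive), so
\begin{equation*}
K_2(\beta) \;\ge\; \frac{2}{\beta} K_1(\beta).
\end{equation*}
Dividing through by $K_2(\beta) > 0$ and rearranging yields $K_1(\beta)/K_2(\beta) \le \beta/2$, as desired.

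There is no real obstacle here: the estimate is a one-line consequence of the recurrence \eqref{recurrence} together with the non-negativity that is built into the integral representation \eqref{ch-representation}. The only point deserving a remark is that the bound is in fact strict, because $K_0(\beta) > 0$ for $\beta > 0$; however, the non-strict inequality stated in the lemma is all that is needed for the applications later in the paper.
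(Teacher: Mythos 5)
Your proof is correct and follows exactly the paper's own argument: both apply the recurrence \eqref{recurrence} with $j=1$, drop the nonnegative term $K_0(\beta)$ to get $K_2(\beta)\ge \tfrac{2}{\beta}K_1(\beta)$, and divide by $K_2(\beta)>0$. Your added remark that the inequality is in fact strict is a harmless, accurate refinement.
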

\begin{proof}
Using \eqref{recurrence} for $j=1$ and the fact that $K_0\ge 0$ we deduce that $K_2(\beta)\ge 2 K_1(\beta)/\beta$ for all $\beta>0$. As $K_j(\beta)>0$ for all $\beta>0$ and $j \in \mathbb{N}$ the result follows. 
\end{proof}
Now we replace the upper estimate granted by Lemma \ref{ldos} on the lhs of \eqref{reformulation} and we also replace the lower estimate on its rhs, so our claim follows. To proceed further we use again the recurrence relation \eqref{recurrence} to obtain more estimates on $K_1/K_2$. 
We display a procedure which is highly reminiscent of what is done in \cite{Synge} to obtain estimates on the functions $K_j(\beta)$ by means of their asymptotic expansions for $\beta \gg 1$ (see also \cite{librotablas} p. 378).
\begin{Lemma}
\label{cuatro}
The following estimates hold for $\beta >1/2$:
\begin{equation}
\label{septima}
\frac{128 \beta^3+48 \beta^2-15 \beta}{128 \beta^3 +240 \beta^2 +105 \beta -30} \le \frac{K_1(\beta)}{K_2(\beta)}\le \frac{4 \beta^2+3 \beta/2}{4 \beta^2 +15 \beta/2+3} \quad \forall \beta\ge 1/2.
\end{equation}
\end{Lemma}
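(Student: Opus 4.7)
The plan is to reformulate both inequalities in \eqref{septima} as statements of the form $A(\beta)K_1(\beta) - B(\beta)K_0(\beta) \ge 0$ by eliminating $K_2$ through the recurrence \eqref{recurrence}, and then prove each by an integration-by-parts argument applied to the representation \eqref{ch-representation} that produces a manifestly non-negative integrand. The single preliminary identity I would use is
$$K_1(\beta)-K_0(\beta)=\frac{1}{2\beta}\int_0^\infty \mathrm{sech}^2(r/2)\, e^{-\beta\cosh r}\,dr,\qquad \beta>0,$$
which follows from writing $\cosh r - 1 = 2\sinh^2(r/2)$ and integrating once by parts using $(e^{-\beta\cosh r})' = -\beta\sinh r\, e^{-\beta\cosh r}$.

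For the right-hand inequality of \eqref{septima}, substitution of $K_2 = \tfrac{2}{\beta}K_1 + K_0$ reduces the claim to $(8\beta+3)K_0 \ge (8\beta-1)K_1$, equivalently $4K_0 \ge (8\beta-1)(K_1-K_0)$. By the identity just stated the difference equals
$$\int_0^\infty\left[4\tanh^2(r/2)+\frac{1}{2\beta}\mathrm{sech}^2(r/2)\right]e^{-\beta\cosh r}\,dr \ge 0,$$
since both summands in brackets are non-negative.

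The left-hand inequality of \eqref{septima} reduces analogously to $(128\beta^2-16\beta+9)K_1 \ge (128\beta^2+48\beta-15)K_0$. This is the delicate case, since the raw integrand $(128\beta^2-16\beta+9)\cosh r - (128\beta^2+48\beta-15)$ changes sign on $[0,\infty)$, so pointwise non-negativity fails. The way around is to exploit the fact that the preliminary identity makes $\mathrm{sech}^2(r/2) - 2\beta(\cosh r -1)$ a zero-integral combination against $e^{-\beta\cosh r}\,dr$; adding $8(8\beta-3)$ copies of it to the raw integrand---guessing this precise multiplier is the main obstacle---transforms the integrand into
$$F(r) = (32\beta+9)\cosh r + (64\beta-24)\,\mathrm{sech}^2(r/2) - (96\beta-15).$$
Writing $c = \cosh(r/2) \ge 1$, this becomes $F = (64\beta+18)c^2 + (64\beta-24)/c^2 - (128\beta-6)$. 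A direct computation gives $F = 0$ at $c=1$, while $\frac{dF}{d(c^2)} = (64\beta+18) - (64\beta-24)/c^4 \ge 42 > 0$ on $c^2 \ge 1$ whenever $\beta \ge 3/8$, so $F \ge 0$ for every $r \ge 0$. The inequality follows, and since the denominator $128\beta^3+240\beta^2+105\beta-30$ is positive on $[1/2,\infty)$, reverting the substitution of $K_2$ produces the claimed lower bound on $K_1/K_2$.
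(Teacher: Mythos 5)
Your proof is correct, and it takes a genuinely different route from the paper's. The paper changes variables via $z^2/(4\beta)=\sinh^2(r/2)$ to represent $K_0$ and $K_0+K_1$ as Gaussian integrals of $(1+z^2/(4\beta))^{\mp 1/2}$, truncates the binomial expansions of those square roots at a term of known sign, and thereby obtains separate two-sided bounds on $K_0$ and on $K_0+K_1$ (its \eqref{quinta}--\eqref{sexta}); the bounds \eqref{septima} then follow from $K_2/K_1=2/\beta+K_0/K_1$ and $K_1/K_0=(K_0+K_1)/K_0-1$, after checking that the lower bounds are positive for $\beta\ge 1/2$. You instead eliminate $K_2$ through the recurrence \eqref{recurrence} so that each side of \eqref{septima} becomes a single linear inequality between $K_0$ and $K_1$ --- I verified that your reductions, $(8\beta+3)K_0\ge(8\beta-1)K_1$ for the upper bound and $(128\beta^2-16\beta+9)K_1\ge(128\beta^2+48\beta-15)K_0$ for the lower one, are exact --- and you certify each by a manifestly non-negative integrand, using the integration-by-parts identity for $K_1-K_0$ (which is correct) and, in the delicate case, adding $8(8\beta-3)$ copies of the resulting null combination $\mathrm{sech}^2(r/2)-2\beta(\cosh r-1)$; your computation that $F$ vanishes at $\cosh^2(r/2)=1$ with $dF/d(c^2)\ge 42$ for $c^2\ge1$ and $\beta\ge 3/8$ also checks out. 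Your method avoids dividing two-sided bounds (and the attendant positivity checks), and it actually yields the upper bound for all $\beta>0$ and the linear form of the lower bound for all $\beta\ge 3/8$; its cost is the ad hoc multiplier $8(8\beta-3)$, which has to be guessed, whereas the paper's truncated-expansion scheme is systematic and can be pushed to higher order mechanically if sharper rational bounds were ever needed. Both arguments ultimately rest only on the representation \eqref{ch-representation} and the recurrence \eqref{recurrence}, so yours is a legitimate self-contained alternative.
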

\begin{proof}
We shall rewrite formula \eqref{ch-representation} by means of the change of variables $\frac{z^2}{4\beta}=\sinh^2(r/2)$. In such a way,
\begin{equation}
\label{cacero}
K_0(\beta)=\frac{e^{-\beta}}{\sqrt{\beta}} \int_0^\infty \frac{1}{\sqrt{1+\frac{z^2}{4\beta}}} e^{-z^2/2}\ dz
\end{equation}
and 
\begin{equation}
\label{casuma}
K_0(\beta)+K_1(\beta)=2 \frac{e^{-\beta}}{\sqrt{\beta}} \int_0^\infty
\sqrt{1+\frac{z^2}{4\beta}} \ e^{-z^2/2}\ dz.
\end{equation}
Now we estimate the integrands. A full binomial expansion for the square roots in \eqref{cacero}--\eqref{casuma} would yield asymptotic expansions for $K_0$ and $K_0+K_1$. Here we will content ourselves keeping just a few terms, which will be enough to prove Theorem \ref{dos}. Following \cite{Synge}, we recall that Taylor's theorem yields the following representation:
\begin{equation}
\label{binomial}
(1+x)^{n-\frac{1}{2}} = 1+\sum_{i=1}^{p-1} {n-\frac{1}{2} \choose i} x^i+ p {n-\frac{1}{2} \choose p} I_p(x), \quad p=1,2,3,\ldots
\end{equation}
where
\begin{equation}
\label{remains}
I_p(x)= \int_0^1 (1-t)^{p-1} (1+t x)^{n-\frac{1}{2}-p} x^p\ dt.
\end{equation}
We notice that $I_p(x)>0$ for $x>0$ and then the remainder term has the same sign as ${n-1/2 \choose p} x^p$ does, which is the first term of the binomial series that we have discarded. If we cut the expansion in a negative term we get an estimate from below, while we get an estimate from above if we cut the expansion in a positive term. Thus, if we substitute $x$ by $x^2$ in \eqref{binomial}--\eqref{remains} we obtain that
\begin{equation}
\label{Taylorbounds}
 1-\frac{x^2}{2} \le \frac{1}{\sqrt{1+x^2}}\le 1-\frac{x^2}{2}+\frac{3x^4}{8},\ \forall x>0,
\end{equation}
\begin{equation}
\label{Taylorbounds2}
 1+\frac{x^2}{2}-\frac{x^4}{8} \le \sqrt{1+x^2}\le 1+\frac{x^2}{2},\ \forall x>0.
\end{equation}
We plug \eqref{Taylorbounds}--\eqref{Taylorbounds2} into \eqref{cacero}--\eqref{casuma} to obtain for all $\beta >0$ the following bounds:
\begin{equation}
\label{quinta}
\sqrt{\frac{\pi}{2}} \frac{e^{-\beta}}{\sqrt{\beta}} \left( 1-\frac{1}{8 \beta}\right) \le K_0(\beta)\le  \sqrt{\frac{\pi}{2}} \frac{e^{-\beta}}{\sqrt{\beta}} \left( 1-\frac{1}{8 \beta}+\frac{9}{128 \beta^2}\right)
\end{equation}
and
\begin{equation}
\label{sexta}
2\sqrt{\frac{\pi}{2}} \frac{e^{-\beta}}{\sqrt{\beta}} \left(1+\frac{1}{8 \beta}- \frac{3}{128 \beta^2}\right) \le K_0(\beta)+K_1(\beta)\le 2\sqrt{\frac{\pi}{2}} \frac{e^{-\beta}}{\sqrt{\beta}} \left(1+\frac{1}{8 \beta}\right). 
\end{equation}
Note that the lower estimates that we get in such a way are positive for (say) $\beta \ge 1/2$. To conclude, we note that thanks to \eqref{recurrence} we have $K_2/K_1=2/\beta + K_0/K_1$; we rewrite $K_1/K_0$ as $(K_0 + K_1)/K_0 -1$ and use \eqref{quinta}--\eqref{sexta} to obtain \eqref{septima}.
\end{proof}

Substitution of the bounds given by \eqref{septima} into \eqref{reformulation} shows that, if 
\begin{equation}
\label{septima2}
\left( \frac{4 \beta^2+3 \beta/2}{4 \beta^2 +15 \beta/2+3}\right)^2 < 1- \frac{3}{4+\beta \frac{128 \beta^3+48 \beta^2-15 \beta}{128 \beta^3 +240 \beta^2 +105 \beta -30}}
\end{equation}
holds true for $\beta\ge 1/2$, then the inequality \eqref{reformulation} holds for $\beta \ge 1/2$ and we will be done. 
After some computations we see that \eqref{septima2} is equivalent to 
\begin{equation}
\label{octava}
\frac{16 \beta^4+12 \beta^3+ 9 \beta^2/4}{
16 \beta^4 + 60 \beta^3 + 321 \beta^2/4+45 \beta + 9}  <\frac{128 \beta^4 +176 \beta^3+225 \beta^2+105 \beta-30}{128 \beta^4+560 \beta^3+945 \beta^2 + 420 \beta -120}.
\end{equation}
Both numerators and denominators in \eqref{octava}  are strictly positive in the range $\beta\ge 1/2$, so we can multiply by the denominators in \eqref{octava}. Thus \eqref{octava} is finally equivalent to
\begin{equation}
\label{novena}
\frac{3}{4}\left(3072 \beta^6 + 20992 \beta^5 +36936 \beta^4 + 25107 \beta^3 +6150 \beta^2 -540 \beta -360 \right) >0.
\end{equation}
Then \eqref{novena} is easily seen to be true for $\beta \ge 1/2$, in fact it suffices to keep the last three terms to ensure it. Thus we have proved Theorem \ref{dos}.
\section*{Acknowledgments} I would like to thank Jared Speck and Robert M. Strain for their support and their most useful comments on a first draft of this document. I also thank the  referees of Commun. Pure Appl. Anal. for their most valuable comments and suggestions. I am  grateful to Juan Soler for useful discussions about the contents of this paper.




\begin{thebibliography}{19}

\bibitem{librotablas} 
     \newblock M. Abramovitz and I. A. Stegun,
     \newblock ``Handbook of Mathematical Functions with Formulas, Graphs, and Mathematical Tables,"
     \newblock New York: Dover Publications, 1972.
     
\bibitem{AW} 
    \newblock J.L. Anderson and H.R. Witting, 
        \newblock \emph{A relativistic relaxation-time model for the Boltzmann equation}, Physica, \textbf{74} (1974), 466--488.
             
\bibitem{BGK}
 \newblock A. Bellouquid, J. Calvo, J. Nieto and J. Soler, 
 \newblock \emph{On the relativistic BGK-Boltzmann model: asymptotics and hydrodynamics} (arXiv: 1205.6603).

\bibitem{CercignaniKremer} 
     \newblock C. Cercignani and G. Medeiros Kremer,
     \newblock ``The Relativistic Boltzmann Equation: Theory and Applications,"
     \newblock Birkh\"auser, Berlin, 2003.
%
\bibitem{deGroot} 
     \newblock S.R. De Groot, W. A. van Leuwen and Ch. G. van Weert,
     \newblock ``Relativistic Kinetic Theory. Principles and Applications,"
     \newblock North Holland, Amsterdam, 1980.

\bibitem{Kunik2004}
    \newblock M. Kunik, S. Qamar and G. Warnecke, 
    \newblock \emph{Kinetic schemes for the relativistic gas dynamics}, Numer. Math., \textbf{97} (2004), 159--191.

\bibitem{Majorana} 
    \newblock A. Majorana, 
        \newblock \emph{Relativistic relaxation models for a simple gas}, J. Math. Phys., \textbf{31} (1990), 2042--2046.
%

\bibitem{Marle2} 
    \newblock C. Marle, 
    \newblock \emph{Sur l'\'etablissement des \'equations de l'hydrodynamique des fluides relativistes dissipatifs.I.-- L'equation de Boltzmann relativiste},
   \newblock (French), Ann. Inst. H. Poincar\'e, \textbf{10} (1969), 67--127.
   
 \bibitem{Rendall04} 
    \newblock A.D. Rendall, 
        \newblock \emph{Asymptotics of solutions of the Einstein equations with positive cosmological constant}, Ann. Henri Poincar\'e, \textbf{5} (2004), 1041--1064.  
        
 \bibitem{Speck2012}
 \newblock J. Speck,       
 \newblock \emph{The stabilizing effect of spacetime expansion on relativistic fluids with sharp results for the radiation equation of state} (ArXiv 1201.1963).

\bibitem{speck-strain} 
    \newblock J. Speck and R. M. Strain, 
    \newblock \emph{Hilbert expansion from the Boltzmann equation to relativistic fluids}, Comm. Math. Phys., \textbf{304} (2011), 229--280.

 
\bibitem{Synge} 
     \newblock J. L. Synge, 
          \newblock ``The Relativistic Gas,"
     \newblock North Holland, Amsterdam, 1957.
     
\end{thebibliography}
 \end{document}